\documentclass[conference]{IEEEtran}
\IEEEoverridecommandlockouts
\usepackage{xcolor,soul,framed} 
\colorlet{shadecolor}{yellow}
\usepackage[pdftex]{graphicx}
\graphicspath{{../pdf/}{../jpeg/}}
\DeclareGraphicsExtensions{.pdf,.jpeg,.png}
\usepackage{amsthm}

\usepackage[cmex10]{amsmath}
\usepackage{float}
\usepackage{xcolor,soul,framed} 
\usepackage{mathrsfs}
\usepackage{amsfonts}
\usepackage{bbm}
\usepackage{amssymb}
\usepackage{mwe,subcaption,tikz}
\usepackage{array}
\usepackage{mdwmath}
\usepackage{mdwtab}
\usepackage{eqparbox}
\usepackage{url}
\usepackage{cite}
\usepackage{amsmath}
\usepackage[noend]{algpseudocode}
\usepackage{algorithmicx,algorithm}
\usepackage{caption}
\usepackage{comment}
\DeclareMathOperator{\Tr}{Tr}
\DeclareMathOperator{\Diag}{Diag}
\DeclareMathOperator{\Argmin}{argmin}

\tikzset{boximg/.style={remember picture,red,thick,draw,inner sep=0pt,outer sep=0pt}}
\colorlet{shadecolor}{yellow}

\newtheorem{problem}{Problem}
\newtheorem{theorem}{Theorem}

\usepackage[bookmarks=false]{hyperref}

\usepackage[a-1b]{pdfx}

\def\BibTeX{{\rm B\kern-.05em{\sc i\kern-.025em b}\kern-.08em
    T\kern-.1667em\lower.7ex\hbox{E}\kern-.125emX}}

\begin{document}
\title{
Multi-Sensor Scheduling for Remote State Estimation over Wireless MIMO Fading Channels with Semantic Over-the-Air Aggregation
}

\author{\IEEEauthorblockN{Minjie~Tang$^*$,  Photios A. Stavrou$^*$, and Marios Kountouris${^*}^{\dagger}$}\\
$^*$Communication Systems Department, EURECOM, Sophia-Antipolis, France\\
$^{\dagger}$Department of Computer Science and
Artificial Intelligence, University of Granada, Spain\\
Emails: \texttt{\{minjie.tang, fotios.stavrou\}@eurecom.fr, mariosk@ugr.es}}

\maketitle
\thispagestyle{empty}
\pagestyle{empty}

\begin{abstract}
In this work, we study multi-sensor scheduling for remote state estimation over wireless multiple-input multiple-output (MIMO) fading channels using a novel semantic over-the-air (SemOTA) aggregation approach. We first revisit Kalman filtering with conventional over-the-air (OTA) aggregation and highlight its transmit power limitations. To balance power efficiency and estimation performance, we formulate the scheduling task as a finite-horizon dynamic programming (DP) problem. By analyzing the structure of the optimal $Q$-function, we show that the resulting scheduling policy exhibits a semantic structure that adapts online to the estimation error covariance and channel variations. To obtain a practical solution, we derive a tractable upper bound on the $Q$-function via a positive semidefinite (PSD) cone decomposition, which enables an efficient approximate scheduling policy and a low-complexity remote estimation algorithm. Numerical results confirm that the proposed scheme outperforms existing methods in both estimation accuracy and power efficiency.
\end{abstract}

\section{Introduction}

Remote state estimation is crucial in applications such as autonomous driving and industrial monitoring \cite{meng2023sensing, huang2020real, yang2025positioning}. A typical system consists of a potentially unstable \emph{dynamic plant}, distributed \emph{sensors}, and a \emph{remote estimator} (Fig.~\ref{fig:1}). The sensors monitor the plant states and transmit measurements over an unreliable wireless network, where the estimator processes the received (noisy) data to improve estimation accuracy. While wireless networks offer flexible deployment, constraints such as limited radio resources and transmit power can degrade estimation performance.

Remote state estimation over wireless networks has been extensively studied \cite{wu2019efficient, liu2022stability, chen2021resilient}, often assuming that each sensor is allocating dedicated radio resources. While this assumption simplifies communication, it results in low spectral efficiency, particularly as the number of sensors grows.
To mitigate this limitation, shared-resource access has been explored for geographically distributed sensors. Coordination mechanisms such as ALOHA \cite{yavascan2021analysis} and time-division multiple access (TDMA) \cite{liang2021data} enable multiple sensors to transmit data while managing interference. However, these schemes can suffer performance degradation due to packet collisions (in random access) and access latency (in scheduled access), which limits their suitability for real-time estimation.
Recently, over-the-air aggregation (OTA) has emerged as a promising alternative \cite{tang2020remote, tang2025csi}. By enabling simultaneous analog transmission of sensor measurements without explicit coordination overhead, OTA can substantially reduce access latency and improve spectral efficiency.
However, a major drawback is the high transmit power demand at the sensor nodes, which becomes increasingly critical as the number of active sensors grows. 

\begin{figure}
    \centering
    \includegraphics[height=2.5cm,width=5.5cm]{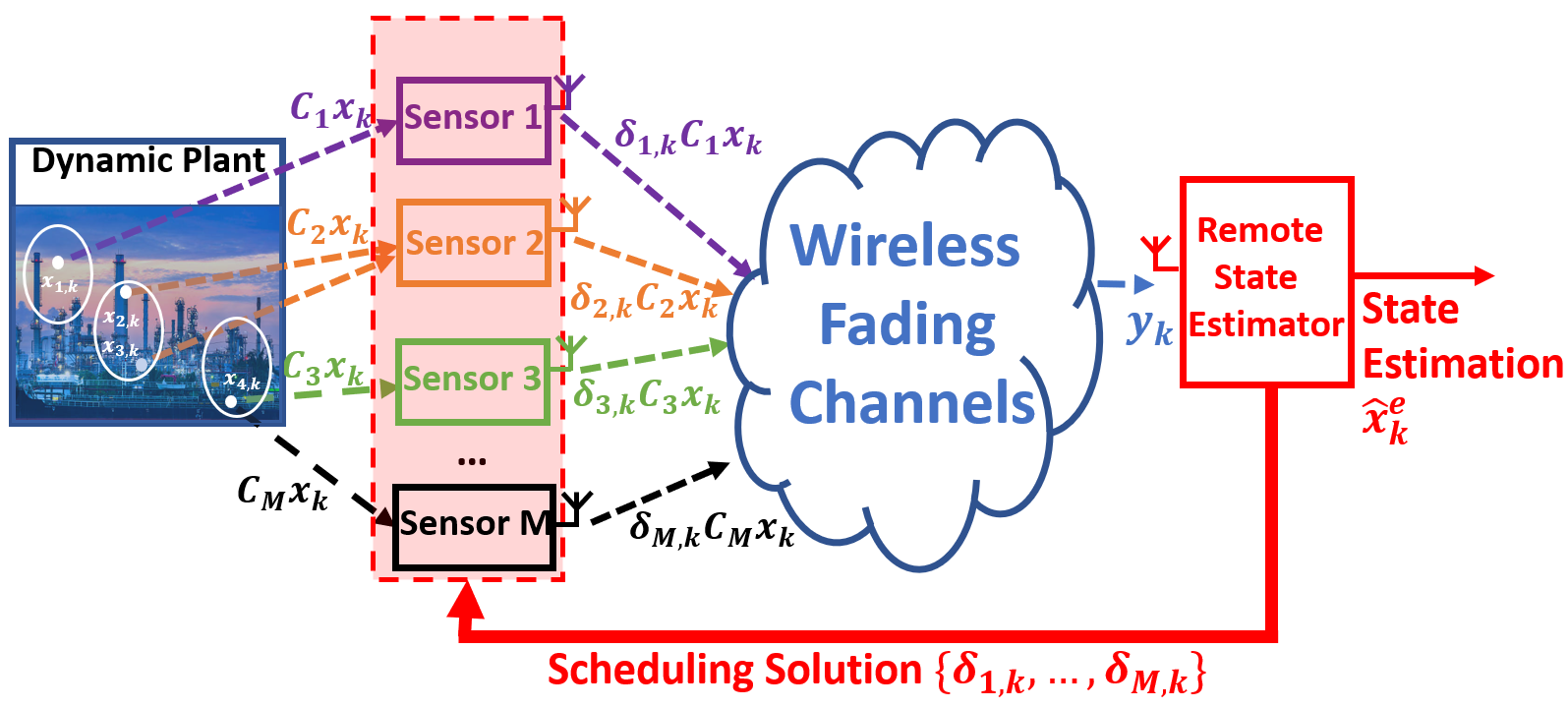}
    \caption{Architecture of a remote state estimation system.}
    \label{architecture}\label{fig:1}
\end{figure}
Recent advances in goal-oriented semantic communications \cite{kountouris2021semantics,stavrou:2023,stavrou2024indirect} focus on transmitting only application-relevant information, thereby reducing the energy consumption of communication devices. In remote state estimation, this paradigm prioritizes informative sensor measurements to improve estimation accuracy under energy constraints. Existing sensor scheduling strategies \cite{zhong2023event, zhong2024event, weerakkody2015multi} often use threshold-based triggering, where sensors transmit when measurements exceed predefined thresholds. However, such schemes can cause redundant transmissions, reduced estimation accuracy, and increased energy consumption, since measurement magnitude alone may not reflect the true informativeness of a measurement. Similarly, scheduling based solely on error covariance \cite{challagundla2020efficient, yang2023attack, han2015stochastic} ignores measurement heterogeneity, which can lead to inefficient power usage. Reinforcement learning (RL) has also been investigated for sensor scheduling \cite{yang2025structure} and often yields covariance-driven, threshold-type policies. Nonetheless, most prior work \cite{zhong2023event, zhong2024event, challagundla2020efficient, yang2023attack, weerakkody2015multi, han2015stochastic, yang2025structure} assumes dedicated radio channels, which limits spectral efficiency and oversimplifies practical communication dynamics. Moreover, when extended to random fading channels, these approaches may suffer further performance degradation in state estimation.

In this work, we propose a novel SemOTA-based sensor-scheduling framework for remote state estimation over random wireless MIMO fading channels. The main contributions are summarized as follows. {\bf (i)} We develop a SemOTA scheme tailored to MIMO fading channels. Unlike conventional OTA aggregation, which can impose a high transmit-power burden due to simultaneous sensor activation \cite{tang2020remote, tang2025csi}, the proposed approach improves energy efficiency by selectively prioritizing informative sensor measurements while maintaining accurate state estimation.
{\bf (ii)} We formulate sensor scheduling for remote state estimation as a finite-horizon stochastic dynamic programming (DP) problem \cite{bertsekas:2019}. Exploiting the structured evolution of the estimation error covariance, we characterize structural properties of the optimal $Q$-function and derive the associated scheduling policy under mild optimality conditions. {\bf (iii)} Using a positive semidefinite (PSD) cone decomposition of the $Q$-function, we obtain a tractable closed-form upper bound. Based on this bound, we develop an approximate scheduling policy and an efficient algorithm for remote state estimation over wireless MIMO fading channels.
{\bf (iv)} We validate the analysis through simulations, showing that the proposed method consistently outperforms several representative baselines in both estimation accuracy and power efficiency.
A key advantage of the proposed scheme in {\bf (i)} is its ability to improve state estimation accuracy while improving transmit-power utilization. Moreover, it preserves the zero-access-latency and high-spectral-efficiency benefits of conventional OTA schemes (see, e.g., \cite{tang2020remote, tang2025csi}). Furthermore, our DP formulation in {\bf (ii)} shows that the optimal scheduling policy exhibits a semantic structure, dynamically adapting to real-time variations in both the estimation error covariance and the channel state.

\emph{Notation:} Bold uppercase and lowercase letters denote matrices and vectors, respectively. $(\cdot)^T$ and $\Tr(\cdot)$ represent the transpose and trace. $\mathbf{0}_{m \times n}$ and $\mathbf{0}_m$ are zero matrices of size $m \times n$ and $m \times m$, respectively. $\mathbf{I}_S$ is the $S \times S$ identity matrix, and $\Diag(a, b, \ldots)$ forms a diagonal matrix.
Sets are denoted as follows: $\mathbb{R}^{m \times n}$ for real $m \times n$ matrices, $\mathbb{S}_+^m$ and $\mathbb{S}_{++}^m$ for positive semidefinite and positive definite $m \times m$ matrices, $\mathbb{Z}_+$ for positive integers, and $\mathbb{R}_+$ for positive real numbers.
Norms are given by $\|\mathbf{A}\|$ (spectral norm of $\mathbf{A}$) and $\|\mathbf{a}\|$ (Euclidean norm of $\mathbf{a}$). $[\mathbf{A}]_i$ denotes the $i$-th principal submatrix, while $[\mathbf{A}]_{a:b, c:d}$ extracts the submatrix from rows $a$ to $b$ and columns $c$ to $d$.
For matrices $\mathbf{A}, \mathbf{B}$ and vector $\mathbf{a}$: $\mathbf{A}(\cdot)^T = \mathbf{A}\mathbf{A}^T$, $\mathbf{A}\mathbf{B}(\cdot)^T = \mathbf{A}\mathbf{B}\mathbf{A}^T$, and $\mathbf{a}(\cdot)^T = \mathbf{a}\mathbf{a}^T$.
The indicator function $\mathbf{1}_{{\cdot}} \in \left\{0,1\right\}$ equals $1$ if the condition holds, otherwise it equals $0$. Expectation is denoted by $\mathbb{E}\left\{\cdot\right\}$.

\section{System Model}
\subsection{Dynamic Plant and Wireless Sensors}
We consider a time-slotted system where the dynamic plant is modeled by a first-order linear time-invariant difference equation for the state $\mathbf{x}_k \in \mathbb{R}^{S \times 1}$, given by:
\begin{align}
\mathbf{x}_{k+1}=\mathbf{A}\mathbf{x}_k+\mathbf{w}_k,
\end{align}
where $\mathbf{A}\in\mathbb{R}^{S\times S}$ defines the known plant dynamics, and $\mathbf{w}_k\sim\mathcal{N}(\mathbf{0}_{S\times 1}, \mathbf{W})$ represents zero-mean Gaussian process noise with a positive definite covariance matrix $\mathbf{W}\in\mathbb{S}_{++}^S$. The initial plant state is denoted by $\mathbf{x}_0=\mathbf{x}(0)\in\mathbb{R}^{S\times 1}$.

We consider $M \in \mathbb{Z}_+$ geographically distributed sensors monitoring the plant. Each sensor is equipped with $N_t \in \mathbb{Z}_+$ transmit antennas and shares a common radio resource pool for high spectral efficiency. The sensing model is given by:
\begin{align}
    \mathbf{z}_{m,k} = \mathbf{C}_m \mathbf{x}_k, \quad k = 0, \ldots, K-1, \quad m=1,\ldots,M,\nonumber
\end{align}
where \(\mathbf{C}_m \in \mathbb{R}^{N_t \times S}\) is the observation matrix of the \(m\)-th sensor, defining how it observes the plant state \(\mathbf{x}_k\). \(\mathbf{z}_{m,k}\in\mathbb{R}^{N_t\times 1}\) is the measurement collected by the \(m\)-th sensor.



\subsection{Wireless Communication Model}
The communication link between the sensors and the remote state estimator is modeled as a $N_r \times N_t$ wireless MIMO fading channel, where $N_r \in \mathbb{Z}_+$ denotes the number of receive antennas. Each active sensor transmits its measurement $\mathbf{z}_{m,k} \in \mathbb{R}^{N_t \times 1}$ via OTA aggregation \cite{tang2020remote, tang2025csi}. The received signal $\mathbf{y}_k \in \mathbb{R}^{N_r \times 1}$ is given by:
\begin{align}\label{receiving-model}
\mathbf{y}_k=\sum_{m=1}^M\delta_{m,k}\mathbf{H}_{m,k}\mathbf{z}_{m,k}+\mathbf{v}_k,~k=0,\ldots,K-1,
\end{align}
where $\mathbf{H}_{m,k} \in \mathbb{R}^{N_r \times N_t}$ represents the MIMO channel matrix, which remains constant within each timeslot and is independently and identically distributed (i.i.d.) across sensors and timeslots. Each entry of $\mathbf{H}_{m,k}$ follows a zero-mean Gaussian distribution with unit variance\footnote{Since  complex-valued MIMO channel can be  rewritten as an equivalent real-valued system by separating the real and imaginary parts, we adopt a real-valued channel model to simplify the notation.
}. The noise term $\mathbf{v}_k \sim \mathcal{N}(\mathbf{0}_{N_r \times 1}, \mathbf{I}_{N_r})$ represents additive white Gaussian noise (AWGN), and $\delta_{m,k} \in \{0,1\}$ is the sensor scheduling variable, detailed in Section~\ref{section:semota}.

\subsection{Remote State Estimator}
The objective of the remote state estimator is to compute the minimum mean-squared error (MMSE) estimate of the plant state $\mathbf{x}_k$ based on the received measurements using Kalman filtering.
Let $\mathbf{y}_0^k = \{\mathbf{y}_0, \dots, \mathbf{y}_k\}$ denote the history of received measurements up to timeslot $k$. We define 
$\widehat{\mathbf{x}}_k^e = \mathbb{E}[\mathbf{x}_k | \mathbf{y}_0^k]$, 
$\widehat{\mathbf{x}}_k = \mathbb{E}[\mathbf{x}_k | \mathbf{y}_0^{k-1}]$,  
$\Sigma_k^e = \mathbb{E}[(\mathbf{x}_k - \widehat{\mathbf{x}}_k^e)(\cdot)^T | \mathbf{y}_0^k]$  
and  
$\Sigma_k = \mathbb{E}[(\mathbf{x}_k - \widehat{\mathbf{x}}_k)(\cdot)^T | \mathbf{y}_0^{k-1}]$
where $\widehat{\mathbf{x}}_k^e \in \mathbb{R}^{S \times 1}$ and $\widehat{\mathbf{x}}_k \in \mathbb{R}^{S \times 1}$ represent the updated (posterior) and predicted (prior) plant states, respectively. Similarly, $\Sigma_k^e \in \mathbb{S}_+^S$ and $\Sigma_k \in \mathbb{S}_+^S$ denote the posterior and prior error covariance matrices of $\mathbf{x}_k$, respectively.
The Kalman filter recursively updates the state estimate as follows:
\begin{align}\label{prediction}
&\widehat{\mathbf{x}}_k=\mathbf{A}\widehat{\mathbf{x}}_{k-1}^e,\\&
\Sigma_k=\mathbf{A}\Sigma_{k-1}^e\mathbf{A}^T+\mathbf{W},\label{prediction-step-kalman-filter}
\\&\label{estimation}
\widehat{\mathbf{x}}_{k}^e=\widehat{\mathbf{x}}_{k}+\mathbf{K}_{k}(\mathbf{y}_k-\sum_{m=1}^M\delta_{m,k}\bar{\mathbf{H}}_{m,k}\widehat{\mathbf{x}}_k),
\\&\Sigma_k^e=(\mathbf{I}_S-\sum_{m=1}^M\delta_{m,k}\mathbf{K}_k\bar{\mathbf{H}}_{m,k})\Sigma_k(\cdot)^T+\mathbf{K}_k(\cdot)^T,\label{estimation-step-kalman-filter}
\end{align}
where  
$\bar{\mathbf{H}}_{m,k} = \mathbf{H}_{m,k} \mathbf{C}_m$, and 
$\mathbf{K}_k=\sum_{m=1}^M\delta_{m,k}\Sigma_k \bar{\mathbf{H}}_{m,k}^T((\sum_{m=1}^M\delta_{m,k}\bar{\mathbf{H}}_{m,k})\Sigma_k(\cdot)^T+\mathbf{I}_{N_r} )^{-1}\in \mathbb{R}^{S \times N_r}$ is the Kalman gain.

In conventional OTA aggregation schemes for remote state estimation, all sensors are active simultaneously and transmit their measurements over a shared radio resource, i.e., $\delta_{m,k}=1$ for all $m,k$. 
Since the aggregated noisy measurement $\sum_{m=1}^M \bar{\mathbf{H}}_{m,k}\mathbf{x}_k$ contains richer state information than individual measurements $\bar{\mathbf{H}}_{m,k}\mathbf{x}_k$, it improves estimation accuracy, as reflected by a lower expected trace of the error covariance $\mathbb{E}[\Tr(\Sigma_k)]$. 
However, this accuracy gain comes at the cost of higher transmit power consumption. 
Specifically, since sensor $m$ transmits $\mathbf{z}_{m,k}$, its transmit power satisfies
$\|\mathbf{z}_{m,k}\|^2 = \Tr(\mathbf{C}_m \mathbf{x}_k \mathbf{x}_k^T \mathbf{C}_m^T)$, which scales with the transmit power
$\Tr(\mathbf{C}_m \mathbf{C}_m^T)$ under the unit-variance normalization $\|\mathbf{x}_k\|^2 = 1$.
Accordingly, we define the transmit power cost of sensor $m$ as
$\Tr(\mathbf{C}_m \mathbf{C}_m^T)$, and the total transmission cost under conventional OTA is given by
$\sum_{m=1}^M \Tr(\mathbf{C}_m \mathbf{C}_m^T)$, which increases with the number of active sensors.
To reduce the power consumption in large-scale sensor networks, we propose a SemOTA-based sensor scheduling policy that balances power efficiency and estimation accuracy.

\section{Semantic Over-the-Air Aggregation}
\label{section:semota}
\subsection{Problem Formulation and Optimal Solution}
The prior error covariance $\Sigma_k$ evolves according to equations (\ref{prediction-step-kalman-filter}) and (\ref{estimation-step-kalman-filter}), and is expressed as:
\begin{align}
    \Sigma_{k+1}=f(\Sigma_k,\left\{\delta_{m,k}\right\},\left\{\mathbf{H}_{m,k}\right\})+\mathbf{W},\label{prediction-evolution}
\end{align}
where $f(\Sigma_k,\left\{\delta_{m,k}\right\},\left\{\mathbf{H}_{m,k}\right\})=\mathbf{A}(\Sigma_k^{-1}+(\sum_{m=1}^M\delta_{m,k}$ $\bar{\mathbf{H}}_{m,k})^T(\cdot))^{-1}\mathbf{A}^T.$

\begin{problem}\label{problem:1}(Problem Formulation for Sensor Scheduling) Consider the state estimation accuracy, measured by $\mathbb{E}[\Tr(\Sigma_k)]$, and the sensor power consumption, quantified by 
$\mathbb{E}[\sum_{m=1}^M \Tr(\delta_{m,k}\mathbf{C}_m \mathbf{C}_m^T)]$. 
The resulting sensor scheduling problem is formulated as a finite-horizon stochastic optimization over $K \in \mathbb{Z}_+$:
 \begin{align}
\min_{\pi}\mathbb{E}\left[\sum_{k=0}^{K-1}\Tr(\Sigma_k)+\sum_{m=1}^M\gamma \Tr(\delta_{m,k}\mathbf{C}_m\mathbf{C}_m^T)+\Tr(\Sigma_K)\right],\nonumber
\end{align}
where $\pi = \{\pi_k: k=0,1, \dots,{K-1}\}$ denotes the sensor scheduling policy over the horizon of $K$ time steps, with $\pi_k = \{\delta_{1,k}, \dots, \delta_{M,k}\}$ representing the sensor activation decisions at each time step $k$, and $\gamma$ is a trade-off parameter that balances estimation accuracy and power consumption.
\end{problem}

\subsection{Optimal Solution}
The optimal sensor scheduling solution, $\delta_{m,k}^*$, for Problem \ref{problem:1} can be formulated using DP recursions with $Q$-functions (also referred to as $Q$-factors) \cite{bertsekas:2019}, as summarized in the following theorem.

\begin{theorem}(DP Equation) Let $\mathbf{H}_{k}=\left\{\mathbf{H}_{1,k},...,\mathbf{H}_{M,k} \right\}$ denote the set of channel fading gains at time $k$.
For $k = K-1,\ldots, 0$, Problem \ref{problem:1} can be optimally solved backward in time using the following DP recursions based on $Q$-functions:
\begin{align}
\label{dp eq}
    &Q_k(\mathcal{M}_k) = \Tr(\Sigma_k) 
    + \sum_{m=1}^M \gamma \Tr(\delta_{m,k} \mathbf{C}_m \mathbf{C}_m^T)+ \nonumber\\
    &\mathbb{E} \bigg[\min_{\pi_{k+1}} Q_{k+1}(\mathcal{M}_{k+1}) 
    \big| \mathcal{M}_k \bigg],
\end{align}
where  
\begin{align}\label{Q-function-form-random-channel}
    &Q_k(\mathcal{M}_k) = \Tr(\Sigma_k) 
    + \sum_{m=1}^M \gamma \Tr(\delta_{m,k} \mathbf{C}_m \mathbf{C}_m^T) \nonumber\\
    &\quad + \Tr(f(\mathcal{M}_k)) 
    + (K-k) \Tr(\mathbf{W})  + \Delta_k(\mathcal{M}_k) \mathbf{1}_{k \leq K-2},
\end{align}
and
\begin{align}\label{delta-form-random-channel}
    &\Delta_k(\mathcal{M}_k) = 
    \mathbb{E} \bigg[ \min_{\pi_{k+1}} \bigg( 
    \sum_{m=1}^M \gamma \Tr(\delta_{m,k+1} \mathbf{C}_m \mathbf{C}_m^T) \nonumber\\
    &\quad + \Tr\big(f(f(\mathcal{M}_k) + \mathbf{W}, 
  \pi_{k+1}, \mathbf{H}_{k+1}\big)\big) \nonumber\\
    &\quad + \Delta_{k+1} \big(f(\mathcal{M}_k) + \mathbf{W}, 
\pi_{k+1},\mathbf{H}_{k+1}\big) \bigg) \bigg| \mathcal{M}_k  \bigg], \nonumber\\&\Delta_{K-1}(\mathcal{M}_{K-1})=0.
\end{align}
In the DP recursions given by Equation \eqref{dp eq}, the state consists of the estimation error covariance \( \Sigma_k \) and the channel state information (CSI) $\mathbf{H}_k$, while the control action corresponds to the sensor scheduling variable $\pi_k$. The state-action pair is given by \( \mathcal{M}_k = \left\{\Sigma_k, \mathbf{H}_{k}, \pi_k \right\} \). 
The expectations in Equations (\ref{dp eq}) and (\ref{delta-form-random-channel}) account for the randomness of \( \mathbf{H}_{k+1} \), averaging its effect conditioned on the current state-action pair \( \mathcal{M}_k \).

The optimal solution 
of \eqref{Q-function-form-random-channel} is given by $\delta_{m,k}^* = 1$ when
\begin{align}
&\Tr\big(f(\Sigma_{k},\{\delta_{-m}^*,\delta_{m,k} = 0, \mathbf{H}_{k}\})\big) \nonumber\\
&+ \Delta_k\big(\Sigma_{k},\{\delta_{-m}^*,\delta_{m,k} = 0, \mathbf{H}_{k}\}\big) \nonumber\\
&- \Tr\big(f(\Sigma_{k},\{\delta_{-m}^*,\delta_{m,k} = 1, \mathbf{H}_{k}\})\big) \nonumber\\
& - \Delta_k\big(\Sigma_{k},\{\delta_{-m}^*,\delta_{m,k} = 1, \mathbf{H}_{k}\}\big) \geq \Tr(\gamma\mathbf{C}_m\mathbf{C}_m^T), \label{optimal-solution}
\end{align}
where $\delta_{-m} = \{\delta_{i} \mid 1 \leq i \leq M, i \neq m\}$.    
\end{theorem}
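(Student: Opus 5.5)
The plan is to apply standard finite-horizon dynamic programming, as in \cite{bertsekas:2019}, to the controlled Markov chain whose state is $(\Sigma_k,\mathbf{H}_k)$ and whose action is $\pi_k$. First we check the Markov structure. By \eqref{prediction-evolution}, $\Sigma_{k+1}=f(\Sigma_k,\pi_k,\mathbf{H}_k)+\mathbf{W}$ is a deterministic function of $\mathcal{M}_k$. Because $\mathbf{H}_{k+1}$ is i.i.d.\ and independent of the past, the conditional law of $(\Sigma_{k+1},\mathbf{H}_{k+1})$ given the whole history depends only on $\mathcal{M}_k$. The stage cost $\Tr(\Sigma_k)+\gamma\sum_m\delta_{m,k}\Tr(\mathbf{C}_m\mathbf{C}_m^T)$ is also a function of $\mathcal{M}_k$, and the terminal cost is $\Tr(\Sigma_K)$. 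The principle of optimality then gives the $Q$-recursion \eqref{dp eq}, with terminal condition $\min_{\pi_K}Q_K=\Tr(\Sigma_K)$. Since the action space $\{0,1\}^M$ is finite, every minimum is attained, and all expectations are finite because each $\Sigma$ along any path is bounded by an open-loop Lyapunov iterate.

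Next we prove the closed form \eqref{Q-function-form-random-channel} by backward induction on $k$.

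\textbf{Base case $k=K-1$.} Substituting $\Sigma_K=f(\mathcal{M}_{K-1})+\mathbf{W}$ gives
\[
Q_{K-1}=\Tr(\Sigma_{K-1})+\gamma\textstyle\sum_m\delta_{m,K-1}\Tr(\mathbf{C}_m\mathbf{C}_m^T)+\Tr(f(\mathcal{M}_{K-1}))+\Tr(\mathbf{W}),
\]
which matches the formula with $K-k=1$ and no $\Delta$ term.

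\textbf{Inductive step.} Assume the form holds at $k+1$ and insert it into \eqref{dp eq}. The term $\Tr(\Sigma_{k+1})=\Tr(f(\mathcal{M}_k))+\Tr(\mathbf{W})$ does not depend on $\pi_{k+1}$, so it can be pulled out of both the minimum and the conditional expectation. The term $(K-k-1)\Tr(\mathbf{W})$ is likewise constant. Together these give $\Tr(f(\mathcal{M}_k))+(K-k)\Tr(\mathbf{W})$. What remains inside $\mathbb{E}[\min_{\pi_{k+1}}(\cdot)\mid\mathcal{M}_k]$ is
\[
\gamma\textstyle\sum_m\delta_{m,k+1}\Tr(\mathbf{C}_m\mathbf{C}_m^T)+\Tr\bigl(f(f(\mathcal{M}_k)+\mathbf{W},\pi_{k+1},\mathbf{H}_{k+1})\bigr)+\Delta_{k+1}(\cdot)\mathbf{1}_{k+1\le K-2}.
\]
This is exactly the definition of $\Delta_k$ in \eqref{delta-form-random-channel}. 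The convention $\Delta_{K-1}=0$ absorbs the indicator at the boundary $k=K-2$.

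The main care needed is this bookkeeping of which terms are $\pi_{k+1}$-independent. This is where the separation $\Sigma_{k+1}=f(\mathcal{M}_k)+\mathbf{W}$ is essential.

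For the optimality condition \eqref{optimal-solution}, we minimize $Q_k$ over $\pi_k$ for fixed $(\Sigma_k,\mathbf{H}_k)$. The terms $\Tr(\Sigma_k)$ and $(K-k)\Tr(\mathbf{W})$ do not depend on $\pi_k$, so it suffices to minimize
\[
J(\pi_k)=\gamma\textstyle\sum_m\delta_{m,k}\Tr(\mathbf{C}_m\mathbf{C}_m^T)+\Tr(f(\Sigma_k,\pi_k,\mathbf{H}_k))+\Delta_k(\Sigma_k,\pi_k,\mathbf{H}_k).
\]
Let $\pi_k^*$ be a minimizer over the finite set. For each coordinate $m$, compare $\pi_k^*$ with the profile that flips only $\delta_{m,k}$ and keeps $\delta_{-m}^*$ fixed. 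Optimality requires $J(\delta_{-m}^*,\delta_{m,k}^*)\le J(\delta_{-m}^*,1-\delta_{m,k}^*)$. Written out, this says $\delta_{m,k}^*=1$ precisely when the reduction in $\Tr(f)+\Delta_k$ from activating sensor $m$ is at least its cost $\gamma\Tr(\mathbf{C}_m\mathbf{C}_m^T)$, which is \eqref{optimal-solution}; ties are broken toward activation.

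The subtle point is that this coordinatewise condition is necessary for joint optimality but, in general, only a fixed-point characterization. I would state it as such, which is consistent with the $\delta_{-m}^*$ appearing in the statement. Sufficiency would need an extra structure, such as submodularity, which is not required here.
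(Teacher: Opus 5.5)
Your proposal is correct and follows essentially the same route as the paper. Both arguments use backward induction from the base case $k=K-1$. The inductive step pulls the $\pi_{k+1}$-independent terms $\Tr(f(\mathcal{M}_k))+(K-k)\Tr(\mathbf{W})$ out of the minimum, so that what remains is exactly $\Delta_k$. The scheduling condition then comes from comparing $Q_k$ at $\delta_{m,k}=0$ and $\delta_{m,k}=1$ with $\delta_{-m}^*$ held fixed. Your version is more explicit than the paper's, which only works out $k=K-1$ and $k=K-2$ before invoking induction, and your reading of \eqref{optimal-solution} as a necessary, fixed-point-type condition matches the paper's own ``only if'' phrasing.
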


\begin{proof}
See Appendix A.    
\end{proof} 

The sensor scheduling policy in Theorem~1 exhibits a semantic structure that adapts to the real-time error covariance $\Sigma_k$ and the CSI $\mathbf{H}_{k}$. Specifically, the left-hand side (LHS) of \eqref{optimal-solution} measures the marginal improvement in state estimation accuracy obtained from activating sensor $m$ (i.e., transmitting $\mathbf{z}_{m,k}$). Consequently, sensors are activated only when their transmissions provide sufficient estimation gains.


\subsection{Approximate Solution}
The optimal sensor scheduling solution in Theorem~1 depends on $\Delta_k(\mathcal{M}_k)$, which is computationally expensive due to its iterative form in \eqref{delta-form-random-channel}. To overcome this issue, we propose an efficient scheduling solution by exploiting an upper bound on $\Delta_k(\mathcal{M}_k)$ within the $Q$-function, $Q_k(\mathcal{M}_k)$. Specifically, by leveraging a positive semidefinite (PSD) cone decomposition \cite{tang2020remote} applied to the aggregated channel gain $\sum_{m=1}^M \delta_{m,k} \bar{\mathbf{H}}_{m,k} \in \mathbb{R}^{N_r \times S}$, we derive a theoretical result that provides an approximate sensor scheduling solution.

\begin{theorem} (Approximate Solution of \eqref{dp eq}) \label{ap solution}
Let $(\sum_{m=1}^M \delta_{m,k} \bar{\mathbf{H}}_{m,k})^T(\cdot) = \mathbf{U}_k^T \Psi_k (\cdot)$, 
where the diagonal elements of $\Psi_k$ are arranged in descending order.  
Define $\gamma_k$ as the rank of $\Psi_k$, and let $\Pi_k = \Diag(\mathbf{1}_{\gamma_k}, \mathbf{0}_{S-\gamma_k}) \in \mathbb{S}_{+}^S$. 
We denote by 
$\alpha(\left\{\pi_{k},\mathbf{H}_{k}\right\}) = \|(\mathbf{A} \mathbf{U}_k^T (\mathbf{I}_S - \Pi_k) \mathbf{U}_k)^T(\cdot)\|$,  $
    \beta(\left\{\pi_{k},\mathbf{H}_{k}\right\})= \|\mathbf{A}\|^2 \Tr([\Psi_k]_{\gamma_k}^{-1})$, and $\bar{\alpha}=\mathbb{E}[\alpha(\left\{\delta_{m,k}=1, \mathbf{H}_{k}\right\})]$.
Then, the approximate  sensor scheduling solution to Problem \ref{problem:1} is given by $\delta_{m,k}^a = 1$ if  
{\begin{align}
    & (1 + (\sum_{i=2}^{K-k} \bar{\alpha}^{i-1} \alpha(\{\delta_{-m}^a, \delta_{m,k} = 0, \mathbf{H}_{k} \}) ) 
    \mathbf{1}_{k\leq K-2}) \nonumber\\
    & \times \Tr\big(f(\Sigma_k, \{\delta_{-m}^a, \delta_{m,k} = 0, \mathbf{H}_{k} \})\big) \nonumber\\
    & + \sum_{i=1}^{K-k-1} \bar{\alpha}^i \beta(\{\delta_{-m}^a, \delta_{m,k} = 0, \mathbf{H}_{k} \}) 
    \mathbf{1}_{k\leq K-2} \nonumber\\
    & - ( 1 + (\sum_{i=2}^{K-k} \bar{\alpha}^{i-1} \alpha(\{\delta_{-m}^a, \delta_{m,k} = 1, \mathbf{H}_{k} \}) ) 
    \mathbf{1}_{k\leq K-2}) \nonumber\\
    & \times \Tr\big(f(\Sigma_k, \{\delta_{-m}^a, \delta_{m,k} = 1, \mathbf{H}_{k} \})\big) \nonumber\\
    & - \sum_{i=1}^{K-k-1} \bar{\alpha}^i \beta(\{\delta_{-m}^a, \delta_{m,k} = 1, \mathbf{H}_{k} \}) 
    \mathbf{1}_{k\leq K-2} \nonumber\\&\geq \Tr(\gamma \mathbf{C}_m \mathbf{C}_m^T),
\end{align}
}and $\delta_{m,k}^a = 0$ otherwise.
\end{theorem}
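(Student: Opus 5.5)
The plan is to replace $\Delta_k(\mathcal{M}_k)$ in the $Q$-function \eqref{Q-function-form-random-channel} by a closed-form upper bound. We then rerun the coordinate-wise comparison that produced \eqref{optimal-solution} in Theorem~1, with the bound in place of $\Delta_k$. The core is a one-step inequality for $\Tr(f(\cdot))$ obtained from the PSD cone decomposition. Write $\mathbf{G}_k=(\sum_m\delta_{m,k}\bar{\mathbf{H}}_{m,k})^T(\cdot)=\mathbf{U}_k^T\Psi_k\mathbf{U}_k$. Split $\mathbb{R}^S$ into the range of $\Pi_k$ (the $\gamma_k$ informative directions) and its complement. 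On the complement the measurement gives no information. On the range, $(\Sigma^{-1}+\mathbf{G}_k)^{-1}\preceq \mathbf{U}_k^T\Diag([\Psi_k]_{\gamma_k}^{-1},\mathbf{0})\mathbf{U}_k$ restricted to the informative block. Using $(\Sigma^{-1}+\mathbf{G})^{-1}\preceq$ the sum of the two projected pieces, I would aim to show
\begin{align}
\Tr\big(f(\Sigma,\pi_k,\mathbf{H}_k)\big)\le \alpha(\{\pi_k,\mathbf{H}_k\})\,\Tr(\Sigma)+\beta(\{\pi_k,\mathbf{H}_k\}).\nonumber
\end{align}
Here $\alpha$ arises from $\Tr(\mathbf{A}\mathbf{P}\Sigma\mathbf{P}\mathbf{A}^T)\le\|\mathbf{P}\mathbf{A}^T\mathbf{A}\mathbf{P}\|\Tr(\Sigma)$ with $\mathbf{P}=\mathbf{U}_k^T(\mathbf{I}_S-\Pi_k)\mathbf{U}_k$. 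The term $\beta=\|\mathbf{A}\|^2\Tr([\Psi_k]^{-1}_{\gamma_k})$ comes from the informative block. Cross terms are handled by the standard inequality $(\Sigma^{-1}+\mathbf{G})^{-1}\preceq 2(\ldots)$ or by an exact block-Schur-complement argument. I expect this matrix inequality, in particular controlling the off-diagonal coupling between the two subspaces without losing constants, to be the main obstacle. I would first try the variational characterization $(\Sigma^{-1}+\mathbf{G})^{-1}=\min_{\mathbf{K}}\{(\mathbf{I}-\mathbf{K}\mathbf{G}^{1/2})\Sigma(\cdot)^T+\mathbf{K}\mathbf{K}^T\}$ with a suboptimal gain $\mathbf{K}$ that inverts only on the informative block; this yields the bound directly.

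Next, I would bound $\Delta_k$ by induction backward from $\Delta_{K-1}=0$. In \eqref{delta-form-random-channel} I take the minimum over $\pi_{k+1}$ at the suboptimal choice $\delta_{m,k+1}=1$ for all $m$ (all sensors on). This drops the power term into a constant that is independent of the decision and therefore irrelevant to the comparison; alternatively it is absorbed by bounding only the estimation part. The next-step trace is then bounded by the one-step inequality, with argument $f(\mathcal{M}_k)+\mathbf{W}$. Taking expectation over the i.i.d.\ $\mathbf{H}_{k+1}$ replaces $\alpha$ by $\bar{\alpha}$, and $\beta$ by its mean, which is again constant in the current decision. Unrolling $K-k-1$ steps gives
\begin{align}
\Delta_k(\mathcal{M}_k)\lesssim \sum_{i=2}^{K-k}\bar{\alpha}^{i-1}\alpha(\{\pi_k,\mathbf{H}_k\})\Tr(f(\mathcal{M}_k))+\sum_{i=1}^{K-k-1}\bar{\alpha}^{i}\beta(\{\pi_k,\mathbf{H}_k\})+c_k,\nonumber
\end{align}
where $c_k$ collects the terms in $\Tr(\mathbf{W})$ and the mean of $\beta$, none of which depend on $\pi_k$. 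The precise bookkeeping of which $\alpha$ and $\beta$ carry the current action versus an averaged one must be arranged to match the stated exponents. I would do this by applying the one-step bound at the current step with the current $(\pi_k,\mathbf{H}_k)$, and with $\bar{\alpha}$ thereafter.

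Finally, I would substitute this upper bound for $\Delta_k$ in $Q_k$, obtaining an approximate $Q$-function $\tilde Q_k\ge Q_k$ up to constants. Following the argument of Theorem~1, I would compare $\tilde Q_k$ at $\delta_{m,k}=1$ and $\delta_{m,k}=0$ with the others fixed at $\delta_{-m}^a$. Activating sensor $m$ is preferred exactly when the reduction in the bounded future-error terms exceeds $\Tr(\gamma\mathbf{C}_m\mathbf{C}_m^T)$. This is the displayed condition, and the indicator $\mathbf{1}_{k\le K-2}$ handles the last stage, where $\Delta_{K-1}=0$.
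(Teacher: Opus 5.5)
Your route is the paper's. It consists of a one-step bound $\Tr(f(\Sigma,\pi_k,\mathbf{H}_k))\le\alpha(\{\pi_k,\mathbf{H}_k\})\Tr(\Sigma)+\beta(\{\pi_k,\mathbf{H}_k\})$, the suboptimal choice of all sensors on after time $k$ (which turns the future power cost into the constant $\sum_m(K-k-1)\gamma\Tr(\mathbf{C}_m\mathbf{C}_m^T)$), averaging over the i.i.d.\ channels to get $\bar{\alpha},\bar{\beta}$, and a coordinate-wise comparison. Your suboptimal-gain proof of the one-step bound is correct and cleaner than the paper's $\Sigma_k^o+\Sigma_k^u$ computation. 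With $\mathbf{K}=\mathbf{U}_k^T\Diag([\Psi_k]_{\gamma_k}^{-1/2},\mathbf{0})\mathbf{U}_k$ you get $\mathbf{I}_S-\mathbf{K}\mathbf{G}_k^{1/2}=\mathbf{U}_k^T(\mathbf{I}_S-\Pi_k)\mathbf{U}_k$ and $\mathbf{K}\mathbf{K}^T=\mathbf{U}_k^T\Diag([\Psi_k]_{\gamma_k}^{-1},\mathbf{0})\mathbf{U}_k$, so no cross terms or factor $2$ appear. The gap is in the step you defer as bookkeeping, and it is not just bookkeeping.

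Unrolling along the all-on trajectory gives $\Delta_k(\mathcal{M}_k)\le(\sum_{i=1}^{K-k-1}\bar{\alpha}^i)\Tr(f(\mathcal{M}_k))+c_k$, with exactly the constants of \eqref{delta-upper-bound}. The step $\mathbb{E}[\alpha\,\Tr(\Sigma_j)]=\bar{\alpha}\,\mathbb{E}[\Tr(\Sigma_j)]$ uses that $\mathbf{H}_j$ is independent of $\Sigma_j$; a backward-induction hypothesis of the stated form, with $\alpha(\{\pi_{k+1},\mathbf{H}_{k+1}\})$ multiplying $\Tr(f(\mathcal{M}_{k+1}))$, would not factor this way. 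Applying the one-step bound once more at time $k$, as you suggest, replaces $\Tr(f(\mathcal{M}_k))$ by $\alpha(\{\pi_k,\mathbf{H}_k\})\Tr(\Sigma_k)+\beta(\{\pi_k,\mathbf{H}_k\})$.

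Neither produces the product $\alpha(\{\pi_k,\mathbf{H}_k\})\Tr(f(\mathcal{M}_k))$ in the stated condition, and your $\lesssim$ bound (with $c_k$ independent of $\Sigma_k$) is false. Take $N_r<S$, $\mathbf{A}=a\mathbf{I}_S$, $k=K-2$, all sensors off at $K-2$, and $\Sigma_{K-2}=c\mathbf{I}_S$. Then $\alpha=\bar{\alpha}=a^2$ and $\beta=0$, so the bound reads $\Delta_{K-2}\le a^6Sc+O(1)$. But $\Sigma_{K-1}\succeq a^2c\mathbf{I}_S$ and the aggregated gain has rank at most $N_r$, so every action at $K-1$ gives $\Tr(f)\ge a^4(S-N_r)c$, i.e.\ $\Delta_{K-2}\ge a^4(S-N_r)c$. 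This is a contradiction for $a^2<1-N_r/S$ and large $c$. An example is $S=4$, $N_r=2$, $a=1/2$; with $S\ge N_r+2$ and $\sum_m\mathbf{C}_m^T\mathbf{C}_m\succ\mathbf{0}$, $\bar{\beta}<\infty$, so the bound is not vacuous.

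The paper makes the same jump. Its $k=K-2$ display correctly contains $\bar{\alpha}\,\alpha(\{\pi_{K-2},\mathbf{H}_{K-2}\})\Tr(\Sigma_{K-2})$, but \eqref{delta-upper-bound} and \eqref{Q-a} write $\Tr(f(\mathcal{M}_k))$ there, so $Q_k\le Q_k^a$ fails. Valid versions of your argument yield one of two rules:
\begin{itemize}
\item a factor $(1+\sum_{i=1}^{K-k-1}\bar{\alpha}^i\mathbf{1}_{k\le K-2})$ multiplying the difference of the two $\Tr(f(\cdot))$ terms, with no current-action $\alpha,\beta$;
\item the stated rule with $\alpha(\cdot)\Tr(f(\Sigma_k,\cdot))$ replaced by $\alpha(\cdot)\Tr(\Sigma_k)$.
\end{itemize}
The stated rule can only be recovered artificially, e.g.\ from
$\Delta_k\le(\sum_{i=1}^{K-k-1}\bar{\alpha}^i)\big(\|\mathbf{A}\|^2\Tr(\Sigma_k)+\alpha(\{\pi_k,\mathbf{H}_k\})\Tr(f(\mathcal{M}_k))+\beta(\{\pi_k,\mathbf{H}_k\})\big)+c_k$.
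This bound is valid because $\Tr(f(\mathcal{M}_k))\le\|\mathbf{A}\|^2\Tr(\Sigma_k)$, and its coordinate-wise minimizer is the stated rule. However, there the $\alpha,\beta$ terms are pure slack, not something the analysis delivers.
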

\begin{proof} See Appendix B.
\end{proof}


Building on Theorem~\ref{ap solution}, we propose a sensor scheduling algorithm for remote state estimation using SemOTA, as outlined below.\footnote{Steps~1 and~2 in Algorithm~\ref{algo:1} require the channel state information (CSI) \(\mathbf{H}_{m,k}\), which can be obtained via standard channel estimation methods (e.g., see \cite{tse2005fundamentals}).}

\begin{algorithm}
\small
\caption{SemOTA for Remote State Estimation}
\textbf{Initialization: } $\widehat{\mathbf{x}}_0\leftarrow \widehat{\mathbf{x}}(0)\in\mathbb{R}^{S\times 1}$, $\Sigma_0\leftarrow \Sigma(0)\in\mathbb{S}_{++}^S$.

\textbf{For $k=0,1,...,K-1$:}

\begin{itemize}

      \item \textbf{Step 1 (Update of Sensor Scheduling Solution): }
    \begin{itemize}
        \item $\left\{\delta_{1,k},...\delta_{M,k}\right\}\leftarrow$ Compute at the remote state estimator based on Theorem 2 using $\Sigma_{k}$ and $\mathbf{H}_k$;
        \item $\left\{\delta_{1,k},...\delta_{M,k}\right\}$ is broadcast to the sensors. 
    \end{itemize}

    \item  \textbf{Step 2 (SemOTA for State Estimation):}
    \begin{itemize}
        \item $\mathbf{y}_k\leftarrow$  based on (\ref{receiving-model}) using $\left\{\delta_{1,k},...\delta_{M,k}\right\}$ and $\left\{\mathbf{z}_{1,k},...,\mathbf{z}_{M,k}\right\}$
         \item $\widehat{\mathbf{x}}_k^e\leftarrow$ based on (\ref{estimation});
        \item $\Sigma_k^e\leftarrow$ based on (\ref{estimation-step-kalman-filter});
        \item $\widehat{\mathbf{x}}_{k+1}\leftarrow$ based on (\ref{prediction});
        \item $\Sigma_{k+1}\leftarrow$ based on (\ref{prediction-step-kalman-filter}).
       
    \end{itemize}
\end{itemize}

\textbf{End}

\label{algo:1}
\end{algorithm}

We conclude this section with a theorem that provides a performance guarantee for Algorithm \ref{algo:1}.
\begin{theorem} (Performance of {Algorithm \ref{algo:1}}) The total cost in Problem \ref{problem:1} under Algorithm \ref{algo:1} is given by  
{\begin{align}
    &\mathbb{E}^{\pi^a} \bigg[ \sum_{k=0}^{K-1} \Tr(\Sigma_k) 
    + \sum_{m=1}^M \gamma \Tr(\delta_{m,k} \mathbf{C}_m \mathbf{C}_m^T) 
    + \Tr(\Sigma_K) \bigg] \nonumber\\
    &\quad \leq \mathbb{E} \bigg[\min_{\pi_0} Q^{a}_0(\mathcal{M}_0) \bigg],
\end{align}
}where \(\mathbb{E}^{\pi^a}[\cdot]\) denotes the expectation under the sensor scheduling policy \(\pi^a = \{\pi^a_0, \pi^a_1, \ldots\}\). The policy \(\pi_k^a\) at timeslot $k$ is given by
\(
\pi^a_k = \{\delta_{1,k}^a, \delta_{2,k}^a, \ldots, \delta_{M,k}^a\},
\)
as defined in Step 1 of Algorithm \ref{algo:1}, and \(Q^a_k(\mathcal{M}_k)\) is defined in (\ref{Q-a}).
\end{theorem}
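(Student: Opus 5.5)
We will prove the bound by backward induction on $k$. The definition of $Q^a_k$ is referenced as (\ref{Q-a}) but not displayed before the statement, so we take it to be the upper-bound surrogate from the proof of Theorem~\ref{ap solution}. That is, $Q^a_k(\mathcal{M}_k)$ is $Q_k(\mathcal{M}_k)$ in \eqref{Q-function-form-random-channel} with $\Delta_k$ replaced by the closed-form bound $\bar\Delta_k$ built from $\alpha$, $\bar\alpha$ and $\beta$. This is the quantity that the scheduling rule of Theorem~\ref{ap solution} minimizes, so $\pi^a_k$ attains $\min_{\pi_k} Q^a_k(\mathcal{M}_k)$.

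Define the cost-to-go of Algorithm~\ref{algo:1} from time $k$ as
\begin{align}
J^a_k(\Sigma_k,\mathbf{H}_k)=\mathbb{E}^{\pi^a}\Big[\sum_{j=k}^{K-1}\big(\Tr(\Sigma_j)+\textstyle\sum_{m}\gamma\Tr(\delta^a_{m,j}\mathbf{C}_m\mathbf{C}_m^T)\big)+\Tr(\Sigma_K)\,\Big|\,\Sigma_k,\mathbf{H}_k\Big].\nonumber
\end{align}
The claim we will establish is $J^a_k(\Sigma_k,\mathbf{H}_k)\le \min_{\pi_k}Q^a_k(\Sigma_k,\mathbf{H}_k,\pi_k)$ for every $k$. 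Evaluating it at $k=0$ and averaging over $\mathbf{H}_0$ gives the theorem.

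\textbf{Base case $k=K-1$.} Here $\mathbf{1}_{k\le K-2}=0$ and $\Delta_{K-1}=0$. Using \eqref{prediction-evolution}, $J^a_{K-1}$ equals $\Tr(\Sigma_{K-1})$ plus the power cost plus $\Tr(f(\mathcal{M}_{K-1}))+\Tr(\mathbf{W})$, evaluated at $\pi^a_{K-1}$. This is exactly $Q^a_{K-1}=Q_{K-1}$ at the minimizing action, so the base case holds with equality.

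\textbf{Inductive step.} From \eqref{prediction-evolution} and the policy at time $k$,
\begin{align}
J^a_k=\Tr(\Sigma_k)+\textstyle\sum_m\gamma\Tr(\delta^a_{m,k}\mathbf{C}_m\mathbf{C}_m^T)+\mathbb{E}\big[J^a_{k+1}(f(\mathcal{M}^a_k)+\mathbf{W},\mathbf{H}_{k+1})\,\big|\,\mathcal{M}^a_k\big].\nonumber
\end{align}
By the induction hypothesis, $J^a_{k+1}\le \min_{\pi_{k+1}}Q^a_{k+1}$. Expanding $Q^a_{k+1}$, the next-step expectation becomes
\begin{align}
\Tr(f(\mathcal{M}_k))+\Tr(\mathbf{W})+\mathbb{E}\big[\min_{\pi_{k+1}}\big(\textstyle\sum_m\gamma\Tr(\cdot)+\Tr(f(\cdot))+(K-k-1)\Tr(\mathbf{W})+\bar\Delta_{k+1}(\cdot)\big)\big].\nonumber
\end{align}
It remains to show that this is at most $\Tr(f(\mathcal{M}_k))+(K-k)\Tr(\mathbf{W})+\bar\Delta_k(\mathcal{M}_k)$. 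After cancelling the $\mathbf{W}$ terms, this reduces to the one-step dominance inequality
\begin{align}
\mathbb{E}\Big[\min_{\pi_{k+1}}\big(\textstyle\sum_m\gamma\Tr(\delta_{m,k+1}\mathbf{C}_m\mathbf{C}_m^T)+\Tr(f(\cdot))+\bar\Delta_{k+1}(\cdot)\big)\,\Big|\,\mathcal{M}_k\Big]\le\bar\Delta_k(\mathcal{M}_k).\nonumber
\end{align}
This is the recursion \eqref{delta-form-random-channel} with $\Delta$ replaced by $\bar\Delta$. Once it holds, $J^a_k\le Q^a_k(\mathcal{M}^a_k)=\min_{\pi_k}Q^a_k$, which closes the induction.

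\textbf{Main obstacle: the one-step dominance.} We will verify it using the construction in Appendix~B. First bound the minimum by the value at the all-active action $\delta_{m,k+1}=1$, or by whichever fixed action was used there to define $\bar\alpha$. Then apply the PSD cone decomposition bound
\begin{align}
\Tr\big(f(\Sigma,\pi,\mathbf{H})\big)\le \alpha(\pi,\mathbf{H})\Tr(\Sigma)+\beta(\pi,\mathbf{H}),\nonumber
\end{align}
with $\Sigma=f(\mathcal{M}_k)+\mathbf{W}$. Take expectations over $\mathbf{H}_{k+1}$, which is independent of $\mathcal{M}_k$ and turns $\alpha$ into $\bar\alpha$, and sum the resulting geometric series in $\bar\alpha$. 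This yields exactly the coefficients $\sum_{i=2}^{K-k}\bar\alpha^{i-1}\alpha$ and $\sum_{i=1}^{K-k-1}\bar\alpha^i\beta$ appearing in Theorem~\ref{ap solution}.

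One bookkeeping point must be handled carefully. The $\Tr(\mathbf{W})$ terms generated inside this chain, and the power terms of the fixed action, must either be included in $\bar\Delta$ as defined in (\ref{Q-a}) or be dominated by it. If the paper's $\bar\Delta$ omits them, the inequality should still follow because $\bar\Delta_k$ was defined as an upper bound on $\Delta_k$ through the very same chain. In that case the argument is identical to the proof that $\Delta_k\le\bar\Delta_k$ in Appendix~B, now carried out with $\bar\Delta_{k+1}$ in place of $\Delta_{k+1}$ on the right-hand side.
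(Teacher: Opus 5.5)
Your backward induction on the cost-to-go $J^a_k$ has the right structure. It is more careful than the paper's own proof, which writes the cost as the telescoping sum $\min_{\pi_0}Q^a_0-\min_{\pi_1}Q^a_1+\cdots$ and declares it equal to $\mathbb{E}[\min_{\pi_0}Q^a_0]$; that step tacitly assumes $Q^a$ satisfies the Bellman recursion, which is exactly your one-step dominance, and even with equality. \textbf{The gap is that you never prove that inequality, and for $Q^a$ as printed in \eqref{Q-a} it is false.}

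The PSD-cone bound $\Tr(f(\Sigma,\pi,\mathbf{H}))\le\alpha(\pi,\mathbf{H})\Tr(\Sigma)+\beta(\pi,\mathbf{H})$ you invoke is valid. However, applying it at time $k+1$ with all sensors active leaves $\big(\sum_{j=1}^{K-k-1}\bar\alpha^{j}\big)\Tr(\Sigma_{k+1})$, where $\Tr(\Sigma_{k+1})=\Tr(f(\mathcal{M}_k))+\Tr(\mathbf{W})$. To reach the $\alpha(\pi_k,\mathbf{H}_k)$-term of $\bar\Delta_k$ you must apply the bound once more at time $k$, and that produces $\alpha(\pi_k,\mathbf{H}_k)\Tr(\Sigma_k)$. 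By contrast, \eqref{Q-a}, \eqref{delta-upper-bound} and the rule of Theorem~\ref{ap solution} all contain $\alpha(\pi_k,\mathbf{H}_k)\Tr(f(\mathcal{M}_k))$. Your chain would then need $(1-\alpha(\pi_k,\mathbf{H}_k))\Tr(f(\mathcal{M}_k))\le\beta(\pi_k,\mathbf{H}_k)$, which fails in general.

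There is a second problem with the printed form. In $\bar\Delta_{k+1}$ the factor $\alpha(\pi_{k+1},\mathbf{H}_{k+1})$ multiplies $\Tr(f(\mathcal{M}_{k+1}))$, which also depends on $\mathbf{H}_{k+1}$. So taking expectations does not ``turn $\alpha$ into $\bar\alpha$''; you get $\mathbb{E}[\alpha^2]\ge\bar\alpha^2$ and $\mathbb{E}[\alpha\beta]$ instead. Your fallback, that dominance follows because $\bar\Delta_k\ge\Delta_k$, does not follow: an upper bound on $\Delta_k$ need not satisfy the Bellman inequality. Appendix~B's ``by backward induction'' is itself inconsistent on exactly this point: its $k=K-2$ step yields $\Tr(\Sigma_{K-2})$, while its general formula has $\Tr(f(\mathcal{M}_k))$.

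The failure is genuine, as this example shows. Take $S=2$, $M=1$, $N_t=1$, $N_r=3$, $\mathbf{C}_1=[1\ \ 0]$, $\mathbf{A}=a\mathbf{I}_2$ with $0<a<1$, $\mathbf{W}=\mathbf{I}_2$, $K=2$, and $\Sigma_0=\Diag(1,s)$.
\begin{itemize}
\item Here $\alpha\equiv a^2=\bar\alpha$, $\beta=a^2/\|\mathbf{H}_{1,k}\|^2$ and $\bar\beta=a^2$.
\item The second state coordinate is never observed, so every policy has cost at least $(1+a^2+a^4)s$.
\item On the other side, $\mathbb{E}[\min_{\pi_0}Q^a_0]\le\mathbb{E}[Q^a_0(\Sigma_0,\mathbf{H}_0,\delta_{1,0}=1)]\le(1+a^2+a^6)s$ plus an $s$-independent constant.
\end{itemize}
For large $s$ the claimed inequality fails, for $\pi^a$ and indeed for any policy.

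The repair is to put $\Tr(\Sigma_k)$ in place of $\Tr(f(\mathcal{M}_k))$ in the $\alpha$-term of \eqref{Q-a}, and correspondingly in the rule of Theorem~\ref{ap solution}. Then $\Sigma_{k+1}$ is determined by $\mathcal{M}_k$, so the expectation over $\mathbf{H}_{k+1}$ does factor. The all-active bound at $k+1$, plus one application of the PSD-cone bound at $k$, reproduces every coefficient of $\bar\Delta_k$ exactly. With $n=K-k-1$:
\begin{itemize}
\item $\Tr(\mathbf{W})$ has weight $\sum_{i=1}^{n}(n+1-i)\bar\alpha^i$;
\item $\bar\beta$ has weight $n+\sum_{i=1}^{n-1}(n-i)\bar\alpha^i$;
\item the power term is $n\sum_m\gamma\Tr(\mathbf{C}_m\mathbf{C}_m^T)$.
\end{itemize}
With this change your induction closes.

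One further point you share with the paper: you take for granted that $\pi^a_k$ attains $\min_{\pi_k}Q^a_k$. The per-sensor rule of Theorem~\ref{ap solution} only certifies coordinatewise optimality, so this should be stated as an assumption, or $\pi^a_k$ should be defined as the argmin.
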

\begin{proof} Note that the sensor scheduling solution in Algorithm \ref{algo:1} is given by $\pi^a_k = \Argmin_{\pi_k} Q^a_k(\mathcal{M}_k)$. 
This implies that:
\begin{align}
    &\mathbb{E}^{\pi^a} \left[ \sum_{k=0}^{K-1} \Tr(\Sigma_k) + \sum_{m=1}^M \gamma \Tr(\delta_{m,k} \mathbf{C}_m \mathbf{C}_m^T) + \Tr(\Sigma_K) \right] \nonumber\\
    &= \mathbb{E}_{\mathbf{H}} \bigg[ \min_{\pi_0} Q_0^a(\mathcal{M}_0) - \min_{\pi_1} Q_1^a(\mathcal{M}_1) + \ldots + \Tr(\Sigma_K) \bigg] \nonumber\\
    &= \mathbb{E}_{\mathbf{H}} \left[ \min_{\pi_0} Q_0^a(\mathcal{M}_0) \right],
\end{align}
where $\mathbb{E}_{\mathbf{H}}\left\{\right\}$ denotes the expectation over $\mathbf{H}=\left\{\mathbf{H}_0,...,\mathbf{H}_{K-1}\right\}$.
This completes the proof.
\end{proof}


\section{Numerical Results}
We evaluate the performance of SemOTA by comparing it against several baseline schemes.
\begin{itemize}
    \item \textbf{Baseline 1:} \emph{(Measurement Threshold with ALOHA)} Sensor $m$ is active if $\|\mathbf{z}_{m,k}\| \geq \sigma_1$ for a given threshold $\sigma_1 > 0$. When multiple sensors are active, their transmissions are coordinated using ALOHA.
    \item \textbf{Baseline 2:} \emph{(Error Covariance Threshold with Random TDMA)} When $\|\Sigma_k\| \geq \sigma_2$ for a given threshold $\sigma_2 > 0$, a single sensor is randomly selected to transmit with probability $\frac{1}{M}$ using random TDMA.
    \item \textbf{Baseline 3:} \emph{(OTA)} All sensor measurements are transmitted simultaneously via OTA.
\end{itemize}

The system dynamics are modeled as:
\begin{equation}
\mathbf{x}_{k+1} =
\begin{bmatrix}
 1.04 & 0.03 & 0.01 \\ 
 0.22 & 0.48 & 0.03 \\ 
 0.021 & 0.004 & 0.78 
\end{bmatrix} 
\mathbf{x}_k + \mathbf{w}_k,
\end{equation}
where $\mathbf{w}_k \sim \mathcal{N}(\mathbf{0}_{3\times 1}, \mathbf{I}_3)$ denotes process noise. We set $N_t = N_r = 2$, and $\mathbf{C}_m \in \mathbb{R}^{2\times 3}$ are randomly generated with elements drawn from $\mathcal{N}(0, 1)$. We set $\gamma=0.4$ and $K=1000$. For fairness,  $\sigma_1$ and $\sigma_2$  are selected via a grid search over $[1,10]$ (step $0.5$) to minimize the average normalized mean square error (NMSE) over a 1000-step horizon.

\subsection{State Estimation NMSE versus the Number of Sensors}
Fig.~\ref{sensor-1} shows the state estimation NMSE as a function of the number of sensors.
Compared to Baseline~1 and Baseline~2, which suffer from severe performance degradation as the number of sensors increases due to collisions and access latency bottlenecks, the proposed scheme maintains robust performance and achieves more than a two-order-of-magnitude NMSE reduction.
This is because OTA aggregation improves estimation accuracy by coherently combining more state information.
On the other hand, the proposed scheme exhibits slightly higher NMSE than Baseline~3, while remaining within the same order of magnitude, since it selectively transmits only the most critical sensor measurements, resulting in some information loss.



\begin{figure}[ht]
    \centering
    \begin{subfigure}[t]{0.49\linewidth}
        \centering
        \includegraphics[height=2.5cm,width=4.2cm]{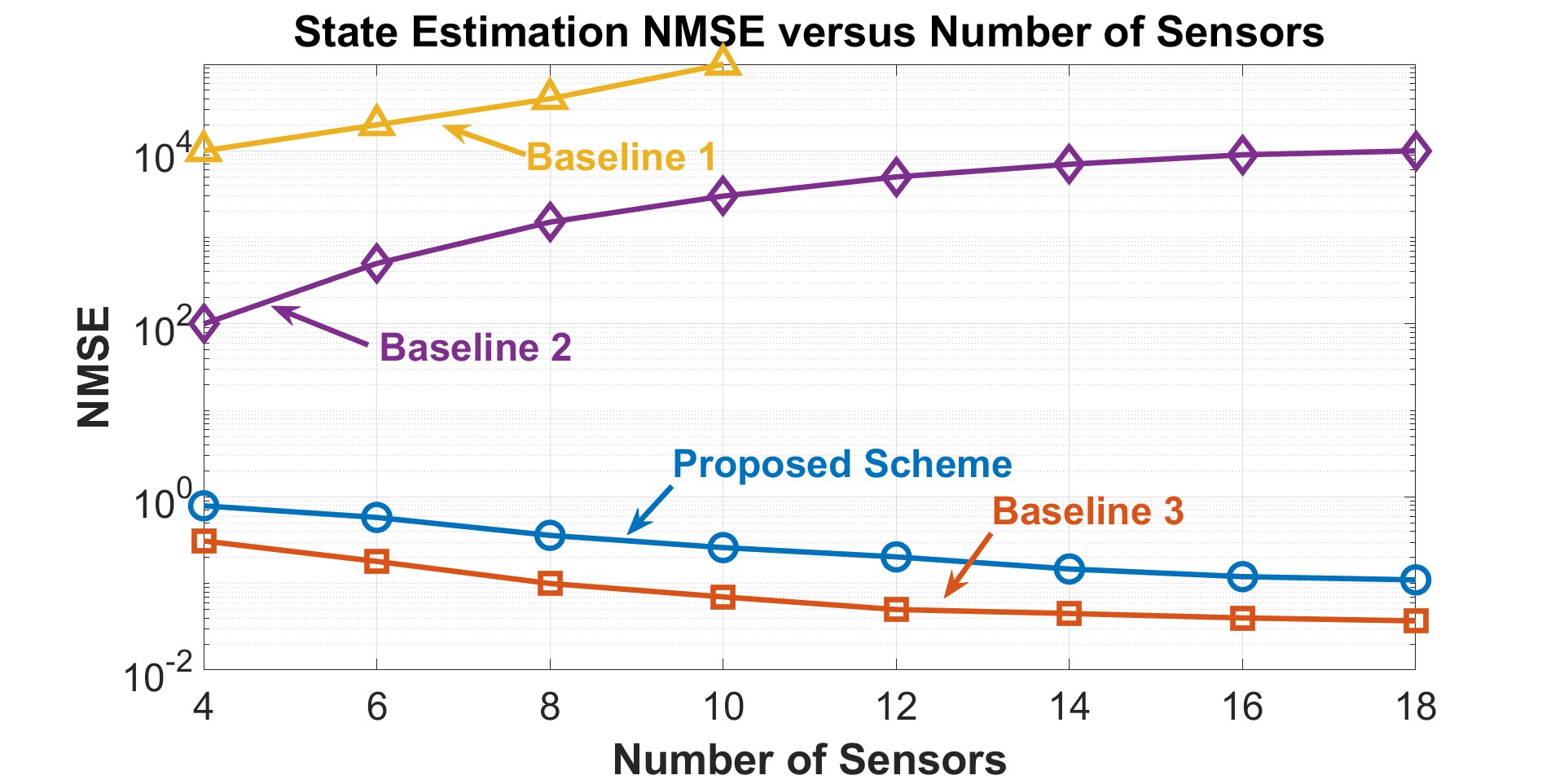}
        \caption{State estimation NMSE versus the number of sensors $M$.}
        \label{sensor-1}
    \end{subfigure}
    \hfill
    \begin{subfigure}[t]{0.49\linewidth}
        \centering
        \includegraphics[height=2.5cm,width=4.2cm]{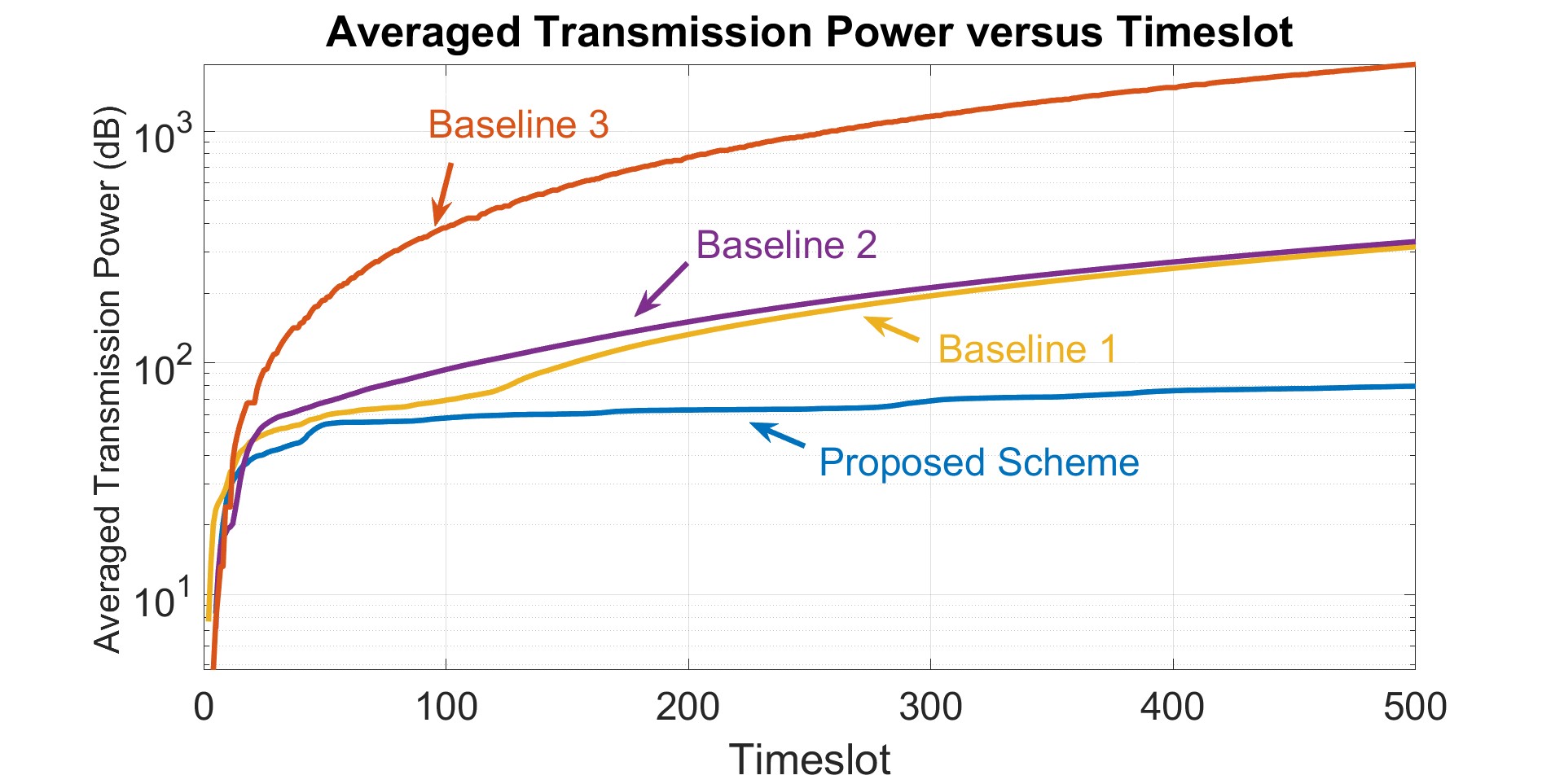}
        \caption{Average transmit power at the sensors versus timeslot for $M=8$.}
        \label{sensor-2}
    \end{subfigure}
    \caption{Numerical results.}
    \label{sensor-fig}
\end{figure}

\subsection{Power Consumption at Sensors versus Timeslot}

Fig.~\ref{sensor-2} illustrates the average transmission power at the sensors,
$\sum_{m=1}^M \mathbb{E}[\|\delta_{m,k}\mathbf{z}_{m,k}\|^2]$, over time.
Compared with the baseline schemes, the proposed approach operates at least one order of magnitude lower in transmit power throughout the entire horizon,
since it selectively transmits only the most critical sensor measurements.

\section{Conclusions}
We have investigated multi-sensor scheduling for remote state estimation over wireless MIMO fading channels using a novel SemOTA-based aggregation approach. By formulating the problem as a finite-horizon dynamic programming (DP) problem, we analyzed the optimal $Q$-function and the corresponding scheduling policy. Our results reveal a semantic-aware structure that dynamically adapts to real-time error covariance and channel variations. Using positive semidefinite (PSD) cone decomposition, we derived an upper bound on the $Q$-function, enabling an approximate solution and an efficient scheduling algorithm. Numerical results demonstrate that the proposed scheme outperforms existing approaches in both state estimation accuracy and power efficiency, confirming its practical effectiveness.
\appendix

\subsection{Proof of Theorem 1}
The proof is constructed using backward induction. 
Specifically, at the $(K-1)$-th step, given the terminal cost $\Tr(\Sigma_K)$, we obtain:
\begin{align}
    &Q_{K-1}(\mathcal{M}_{K-1}) = \Tr(\Sigma_{K-1}) + \sum_{m=1}^M \gamma \Tr(\delta_{m,K-1} \mathbf{C}_m \mathbf{C}_m^T) \nonumber\\
    &\quad + \Tr(\mathbf{W}) + \Tr(f(\mathcal{M}_{K-1})).
\end{align}
The optimal solution is given by   $\delta_{m,K-1}^* = 1$ only if
\begin{align}
    &\gamma \Tr(\mathbf{C}_m \mathbf{C}_m^T) 
    + \Tr(f(\Sigma_{K-1}, \{\delta_{-m,K-1}^*, \delta_{m,K-1} = 1, 
    \mathbf{H}_{K-1} \})) 
    \nonumber\\&\leq \Tr(f(\Sigma_{K-1}, \{\delta_{-m,K-1}^*, \delta_{m,K-1} = 0, \mathbf{H}_{K-1} \})).
\end{align}

Similarly, at the $(K-2)$-th step, we have:  
\begin{align}
    &Q_{K-2}(\mathcal{M}_{K-2}) 
    = \Tr(\Sigma_{K-2}) + \sum_{m=1}^M \gamma \Tr(\delta_{m,K-2} \mathbf{C}_m \mathbf{C}_m^T) 
    \nonumber\\&+ \Tr(f(\mathcal{M}_{K-2}) + \Delta_{K-2}(\mathcal{M}_{K-2}) 
   + 2 \Tr(\mathbf{W}).
\end{align}
The optimal solution is given by  $\delta_{m,K-2}^*=1$ only if
\begin{align}
    &\Tr(\gamma \mathbf{C}_m \mathbf{C}_m^T)  
    + \Tr(f(\Sigma_{K-2}, \{\delta_{-m,K-2}^*, \delta_{m,K-2} = 1,\nonumber\\&  \mathbf{H}_{K-2} \}))  + \Delta_{K-2}(\Sigma_{K-2}, \{\delta_{-m,K-2}^*, \delta_{m,K-2} = 1,\mathbf{H}_{K-2} \})\nonumber\\&  \leq \Tr(f(\Sigma_{K-2}, \{\delta_{-m,K-2}^*, \delta_{m,K-2} = 0, \mathbf{H}_{K-2} \}))  +\nonumber\\&  \Delta_{K-2}(\Sigma_{K-2}, \{\delta_{-m,K-2}^*, \delta_{m,K-2} = 0, \mathbf{H}_{K-2} \}).
\end{align}

By backward induction, 
we can obtain that for any $k =K-1,\ldots,1,0$, the $Q$-function  
$Q_k(\mathcal{M}_k)$ is given by (\ref{Q-function-form-random-channel}).  
The optimal solution satisfies $\delta_{m,k}^* = 1$ if (\ref{optimal-solution}) holds; otherwise, $\delta_{m,k}^* = 0$. This completes the proof of Theorem 1.

\subsection{Proof of Theorem 2}

Denote \( \bar{\beta} = \mathbb{E}[\beta(\{\delta_{m,k}=1, \mathbf{H}_{k} \})] \), \( \sum_{m=1}^{M} \bar{\mathbf{H}}_{m,k}^T(\cdot) = \bar{\mathbf{U}}_k^T \bar{\Psi}_k(\cdot) \), where the diagonal elements of \( \bar{\Psi}_k \) are arranged in descending order. Further, let \( \text{rank}(\bar{\Phi}_k) = \bar{\gamma}_k \) and $\bar{\Pi}_k=\Diag(\mathbf{I}_{\bar{\gamma}_k},\mathbf{0}_{S-\bar{\gamma}_k})$. We decompose $\Sigma_{k}$ as $\Sigma_{k}=\Sigma_{k}^o+\Sigma_{k}^u$, where
$\Sigma_{k}^{o}=  \bar{\mathbf{U}}_{k}^T\left[\begin{array}{cc}
(\Sigma_k)_{\bar{\gamma}_k} & (\Sigma_k)_{\bar{\gamma}_k}\bar{\mathbf{L}}_{k}\\
\bar{\mathbf{L}}_{k}^{T}(\Sigma_k)_{\bar{\gamma}_k} & \bar{\mathbf{L}}_{k}^T(\Sigma_k)_{\bar{\gamma}_k}\bar{\mathbf{L}}_{k}
\end{array}\right]\bar{\mathbf{U}}_{k}$, $\Sigma_k^u=\bar{\mathbf{U}}_k^T(\mathbf{I}_S-\bar{\Pi}_k)\bar{\mathbf{U}}_k\Sigma_k(\cdot)^T-\bar{\mathbf{U}}_k^T\Diag(\mathbf{0}_{\bar{\gamma}_k},\bar{\mathbf{L}}_k^T(\Sigma_k)_{\bar{\gamma}_k}\bar{\mathbf{L}}_k)\bar{\mathbf{U}}_k$.  $(\Sigma_k)_{\bar{\gamma}_k}=(\bar{\mathbf{U}}_k\Sigma_k\bar{\mathbf{U}}_k^T)_{\bar{\gamma}_k}$, and 
$\bar{\mathbf{L}}_k=(\Sigma_k)_{\bar{\gamma}_k}^{-1}(\bar{\mathbf{U}}_k\Sigma_{k}\bar{\mathbf{U}}_k^T)_{1:\bar{\gamma}_k;\bar{\gamma}_{k}+1:S}$.
Note that $\bar{\Psi}_k\Sigma_k^u=\mathbf{0}_S$ and $\text{ker}(\bar{\Psi}_k\Sigma_k^o)=\text{ker}(\Sigma_k^o)$, where $\text{ker}(\cdot)$ denotes the matrix kernel.  We have:
{
\begin{align}\label{psd decomposition}
    &\mathbb{E}_{\mathbf{H}_{K-1}}[\Tr(f(\Sigma_{K-1}, 
    \{\delta_{m,K-1} = 1, \mathbf{H}_{m,K-1}\}))]
    \nonumber\\
    &=\mathbb{E}_{\mathbf{H}_{K-1}}[\Tr(\mathbf{A}\Sigma_{K-1}^{u}\mathbf{A}^T+\mathbf{A}((\Sigma_{K-1}^o)^{-1}+\sum_{m=1}^M\bar{\mathbf{H}}_{m,K-1}^T\nonumber\\&\bar{\mathbf{H}}_{m,K-1})^{-1}\mathbf{A}^T)]\nonumber\\&=
    \mathbb{E}_{\mathbf{H}_{K-1}}[\Tr(\mathbf{A}\bar{\mathbf{U}}_{K-1}^T(\mathbf{I}_S-\bar{\Pi}_{K-1})\bar{\mathbf{U}}_{K-1}\Sigma_{K-1}(\cdot)^T)]+\nonumber\\&\mathbb{E}_{\mathbf{H}_{K-1}}[\Tr(\mathbf{A}\bar{\mathbf{U}}_{K-1}^T\Tilde{\Psi}_{K-1}^T\nonumber\\
    &\small\begin{bmatrix}(\mathbf{I}_{\bar{\gamma}_{K-1}}+(\Tilde{\Sigma}_{K-1})_{\bar{\gamma}_{K-1}}^{-1})^{-1}&*\\\bar{\mathbf{L}}_{K-1}^T(\Sigma_{K-1})_{\bar{\gamma}_{K-1}}( \mathbf{I}_{\bar{\gamma}_{K-1}}+ (\Tilde{\Sigma}_{K-1})_{{\bar{\gamma}}_{K-1}}^{-1})^{-1}&\eta_{K-1}\end{bmatrix}(\cdot)^T)]\nonumber
    \end{align}
    \begin{align}
    &= \mathbb{E}_{\mathbf{H}_{K-1}}[\Tr(\mathbf{A}\bar{\mathbf{U}}_{K-1}^T(\mathbf{I}_S-\bar{\Pi}_{K-1})\bar{\mathbf{U}}_{K-1}\Sigma_{K-1}(\cdot)^T)]\nonumber\\&+\mathbb{E}_{\mathbf{H}_{K-1}}[\Tr(\mathbf{A}\bar{\mathbf{U}}_{K-1}^T\nonumber\\&\!\!\!\!\!\!\!\!\small\begin{bmatrix}(\bar{\Psi}_{K-1})_{\bar{\gamma}_{K-1}}^{-\frac{1}{2}}(\mathbf{I}_{\bar{\gamma}_{K-1}}+(\tilde{\Sigma}_{K-1})_{\bar{\gamma}_{K-1}}^{-1})^{-1}(\bar{\Psi}_{K-1})_{\bar{\gamma}_{K-1}}^{-\frac{1}{2}}&*\\\bar{\mathbf{L}}_{K-1}^T(\Sigma_{K-1})_{\bar{\gamma}_{K-1}}(\mathbf{I}_{\bar{\gamma}_{K-1}}+(\widetilde{\Sigma}_{K-1})_{\bar{\gamma}_{K-1}}^{-1})^{-1}&-\widetilde{\eta}_{K-1}\end{bmatrix}\nonumber\\&(\cdot)^T)]
    \nonumber\\&\leq \bar{\beta} + \bar{\alpha} \Tr(\Sigma_{K-1}),
\end{align}
}where $(\Tilde{\Sigma}_k)_{\bar{\gamma}_k}=(\bar{\Psi}_k)^{\frac{1}{2}}_{\bar{\gamma}_k}(\Sigma_k)_{\bar{\gamma}_k}(\bar{\Psi}_k)^{\frac{1}{2}}_{\bar{\gamma}_k}$,
$\widetilde{\mathbf{P}}_{\gamma_k}=(\bar{\Psi}_k)^{\frac{1}{2}}_{\bar{\gamma}_k}(\Sigma_k)_{\bar{\gamma}_k}(\bar{\Psi}_k)^{\frac{1}{2}}_{\bar{\gamma}_k}$,   $\widetilde{\eta}_k=\bar{\mathbf{L}}_k^T(\Sigma_k)_{\bar{\gamma}_k}(\mathbf{I}_S+(\widetilde{\Sigma}_k)_{\bar{\gamma}_k}^{-1})^{-1}(\cdot)^T$, and $\eta_k=\bar{\mathbf{L}}_k^T(\Sigma_k)_{\bar{\gamma}_k}\bar{\mathbf{L}}_k-\bar{\mathbf{L}}_k^T(\Sigma_k)_{\bar{\gamma}_k}(\mathbf{I}_{\bar{\gamma}_k}+(\Tilde{\Sigma}_k)_{\bar{\gamma}_k})^{-1}(\Sigma_k)_{\bar{\gamma}_k}\bar{\mathbf{L}}_k.$
$*$ denotes the transpose of the value at position with row and column swapped.

This further gives that 
\begin{align}
    &\mathbb{E}_{\mathbf{H}_{K-1}} [ \Tr (f(f(\mathcal{M}_{K-2}) + \mathbf{W}, \{\delta_{m,K-1} = 1\}) ) ] \nonumber\\
    &\leq \bar{\alpha} \alpha(\{\pi_{K-2}, \mathbf{H}_{K-2}\}) \Tr(\Sigma_{K-2}) + \bar{\alpha} \Tr(\mathbf{W}) + \nonumber\\
    &\quad \bar{\alpha}  \beta(\{\pi_{K-2}, \mathbf{H}_{K-2}\}) + \bar{\beta}.
\end{align}

By backward induction, we can obtain that for any $k=K-2,...,1,0$,
$\Delta_k(\mathcal{M}_k)$ is upper bounded as follows:
{
\begin{align}\label{delta-upper-bound}
    &\Delta_k(\mathcal{M}_k) \leq \sum_{m=1}^M (K-k-1) \gamma \Tr(\mathbf{C}_m \mathbf{C}_m^T)  
     \nonumber\\
    &\!\!\! + \sum_{i=1}^{K-k-1} (K-k-i) \bar{\alpha}^i \Tr(\mathbf{W})+ (\sum_{i=2}^{K-k} \bar{\alpha}^{i-1} )  
    \alpha(\left\{\pi_k,\mathbf{H}_{k}\right\}) \nonumber\\
    &\times \Tr(f(\mathcal{M}_k))  + \sum_{i=1}^{K-k-1} \bar{\alpha}^i \beta(\pi_k,\mathbf{H}_{k}) 
    \nonumber\\&+ \sum_{i=1}^{K-k-2} (K-k-i-1) \bar{\alpha}^i \bar{\beta}  
    + (K-k-1) \bar{\beta}.
\end{align}
}By substituting $\Delta_k(\mathcal{M}_k)$ into the $Q$-function $Q_k(\mathcal{M}_k)$ using its upper bound (\ref{delta-upper-bound}), we obtain $ Q_k(\mathcal{M}_k)\leq  Q_k^a(\mathcal{M}_k),$
where $Q_k^a(\mathcal{M}_k)$ is defined as:
\begin{align}\label{Q-a}
    &Q_k^a(\mathcal{M}_k)\triangleq\Tr(\Sigma_k)+\sum_{m=1}^M\gamma\Tr(\delta_{m,k}^a\mathbf{C}_m\mathbf{C}_m^T)+\nonumber\\&\Tr(f(\mathcal{M}_k))+(K-k)\Tr(\mathbf{W})+\sum_{m=1}^M(K-k-1)\gamma \Tr(\mathbf{C}_m\nonumber\\&\times \mathbf{C}_m^T)+(K-k-1)\bar{\beta}+(\sum_{i=2}^{K-k}\bar{\alpha}^{i-1}\alpha(\pi_k,\mathbf{H}_{k})\nonumber\\&\Tr(f(\mathcal{M}_k))+\sum_{i=1}^{K-k-1}(K-k-i)\bar{\alpha}^i\Tr(\mathbf{W})+\sum_{i=1}^{K-k-1}\nonumber\\&\bar{\alpha}^i\beta(\pi_k,\mathbf{H}_{k})+\sum_{i=1}^{K-k-2}(K-k-i-1)\bar{\alpha}^i\bar{\beta})\mathbf{1}_{k\leq K-2}.
\end{align}
By minimizing $Q^a_k(\mathcal{M}_k)$ over the scheduling variable $\{\delta_{m,k}^a\}$, we obtain the structured form of $\{\delta_{m,k}^a\}$. This completes the proof.

\section*{Acknowledgment}
This work is supported by the SNS JU project 6G-GOALS \cite{strinati:2024} under the EU’s Horizon programme Grant Agreement No. 101139232. The work of P. A. Stavrou is also supported by the
Huawei France-EURECOM Chair on Future Wireless Networks.

\bibliographystyle{IEEEtran}
\bibliography{IEEEabrv,Bibliography}
\end{document}